\documentclass[journal]{IEEEtran}
\usepackage {graphicx} 
\usepackage {amsthm}
\usepackage[fleqn] {amsmath}
\usepackage {amsfonts, amssymb}
\usepackage{multirow}

\usepackage{enumerate}
\usepackage{color}
\usepackage{algorithmic} 
\usepackage[]{algorithm}

\newtheorem{theorem}{Theorem}

\newtheorem{proposition}[theorem]{Proposition}

\title{A Potential Game for Power and Frequency Allocation in Large-Scale Wireless Networks}
\author{Brage~Ellings\ae ter, Magnus~Skjegstad and Torleiv~Maseng}
\begin{document}
\maketitle
\begin{abstract}
In this paper we analyze power and frequency allocation in wireless networks through potential games. Potential games are used frequently in the literature for this purpose due to their desirable properties, such as convergence and stability. However, potential games usually assume massive message passing to obtain the necessary neighbor information at each user to achieve these properties. In this paper we show an example of a game where we are able to characterize the necessary neighbor information in order to show that the game has a potential function and the properties of potential games. We consider a network consisting of local access points where the goal of each AP is to allocate power and frequency to achieve some SINR requirement. We use the physical SINR model to validate a successful allocation, and show that given a suitable payoff function the game emits a generalized ordinal potential function under the assumption of sufficient neighbor information. Through simulations we evaluate the performance of the proposed game on a large scale in relation to the amount of information at each AP.
\end{abstract}

\begin{IEEEkeywords}
Radio resource mangament, dynamic spectrum access, game theory, potential games.
\end{IEEEkeywords}


\section{Introduction}
With the demand for wireless capacity increasing exponentially, the concept of femtocells and other distributed networks are seen as potential technologies to remedy this demand. With these concepts comes the need to analyze and characterize these distributed networks. Game theory is a popular tool to investigate the behavior of autonomous agents as it is capable of characterizing convergence criteria and stability of different decision making policies. It has been used to analyze distributed wireless networks as early as \cite{Wu1995}\cite{Shah1998}. In most examples of applying game theory to wireless networks the problem statement is to optimize the network with regard to some parameter, e.g. maximize sum-rate, proportional fairness or average signal-to-interference-plus-noise-ratio (SINR), or minimize network interference. If these optimization problems satisfy some criteria (such as convexity) the overall optimization problem can be decoupled into individual optimization problems which can be solved individually by each user. However, due to the nature of interference this is rarely the case and the global optimization is usually non-convex and NP-hard.

Instead of proposing complex algorithms based on KKT conditions and second-order necessary conditions, one can propose simpler algorithms based on simplified system models \cite{Bensaou} or even common sense \cite{DaCosta2010} and use game theory to show properties such as convergence and stability. As convergence, stability and efficiency of steady-states depends on the chosen payoff function, designing a suitable payoff function is the most difficult part of modeling a system through game theory. Often small tweaks are necessary to achieve the desired properties of the game, thus it is important to keep in mind what the payoff function of the game reflects when investigating the efficiency of a steady-state.

One class of games where it is clear what the payoff function of each individual reflects when combined is potential games. The class of potential games consists of games which emits a potential function such that the increase or decrease in payoff of each individual under best-response play is reflected in a function. If this function reflects some overall system performance, it is easy to see how the decisions made by each individual affects the system as a whole. Potential games where introduced in \cite{Monderer1996}, where it was shown that this class of games exhibits very desirable properties. This include the existence of at least one pure strategy Nash equilibrium (NE), convergence to a NE and Lyapunov stability of the NEs (when the decision variable is continuous). Due to this, potential games have been used for resource allocation in decentralized wireless networks in \cite{Neel2004}\cite{Hicks2004}\cite{Comaniciu}\cite{di-li}. In addition, potential games in distributed networks is broadly covered in \cite{Neel2006}.

If one can find a potential function which reflects a desired aspect of system performance, then best-response play by the individuals will increase the system performance as well as their own payoff. The drawback is that to achieve this property each individual payoff has to embody the performance of the other users in some way. This is usually achieved through information exchange between users. The issue of limiting the information exchange to only include a subset of nodes was done in \cite{Canales2010} under the protocol interference model. In this model, interference from a transmitter is bounded in range so that interference from transmitters which are more than a threshold value away is neglected. However, in large scale networks the accumulated interference from all interfering transmitters can be much larger than that created by such a subset.


In this paper we characterize the necessary number of neighbors that each user needs to know the strategies of under the physical SINR model \cite{4146676}. Showing that such a subset exists under this interference model enables a distributed implementation, as global knowledge is not possible in a distributed approach. Combined with a practical neighbor discovery mechanism \cite{Skjegstad2012a} we show results from simulations based on the progress of neighbor discovery which validates that given sufficient information our game will converge.

\section{Preliminaries}
\label{sec:preliminaries}
\subsection{System Model}
We investigate a system consisting of $N$ wireless access points (APs). We envision these APs operate over unlicensed bands (either in the ISM bands or in the TV white spaces). The set of APs considered is denoted $\mathcal{N}$, where $|\mathcal{N}| = N$. 

To make our system model as generic as possible we let each AP have its individual spectral resources. In some systems, such as the ISM band, the spectral resources of each AP is the same, while in other systems, such as white space devices operating in the TV white spaces, the spectral resources depend on location. We therefore let the channel set available to each AP be a subset of the total channel set. The set of channels available to AP $i$ is denoted $\mathcal{K}_i$, with $|\mathcal{K}_i| = K_i$. The total set of channels is denoted $\mathcal{K}$, where $|\mathcal{K}| = K$, and we have that $\mathcal{K}_i\in\mathcal{K}$, $\forall i \in \mathcal{N}$.

We assume each AP has a maximum power $P_i^{\max}$ and a quality of service requirement. In this paper this quality of service metric is reflected through a SINR requirement $\beta_i$ for each AP $i$. The SINR of AP $i$ over channel $k$ is given as
\begin{equation}
SINR_i(k) = \frac{g_{ii}(k)p_i(k)}{N_0 +I_i(k)}
\end{equation}
where $p_i(k)$ is node $i$'s transmit power on channel $k$, $N_0$ is the power of the thermal noise and $I_i(k)$ is the interference observed at AP $i$ over channel $k$ and will be defined in the next subsection. $g_{ii}(k)$ is the channel gain from AP $i$ to its terminals. With multiple terminals, this can described as either the average channel gain between AP $i$ and its terminals, or as a worst case channel gain between AP $i$ and its terminals. This channel gain between AP $i$ and its terminals can be used to approximate a coverage area for AP $i$ with radius $r_i$. 

The goal of each AP is to select a channel such that its SINR requirement, $\beta_i$, is satisfied. This means that we find a channel $k^{*}$ such that $SINR_i(k^{*})\geq \beta_i$. To achieve its SINR requirement each AP is allowed to access \textit{one} channel and select a suitable power $p_i\leq P_i^{\max}$.  


\begin{figure}[t]
\centering
\includegraphics[width = 0.6\columnwidth]{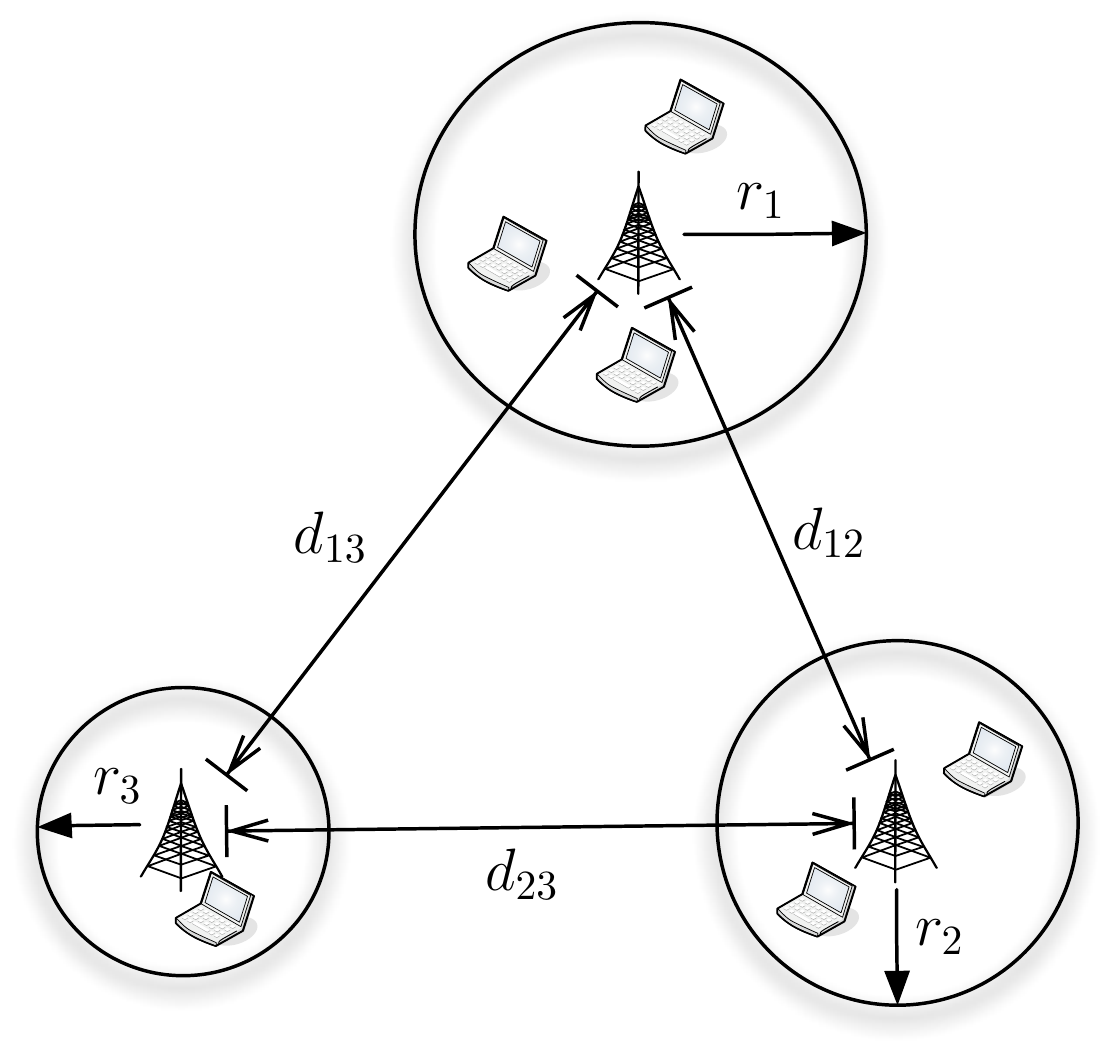}
\caption{Three access points with a distance $d_{ij}$ between them and coverage area given by $r_i, i = 1, 2, 3.$}
\label{fig:3-aps}
\end{figure}

\subsection{Interference Modeling}
In distributed networks it is difficult to characterize the interference between neighboring APs. Consider the simple network depicted in Fig. \ref{fig:3-aps}. The distance between each AP is given by $d_{ij} = d_{ji}$ and each AP has a coverage radius of $r_i$, $i= 1, 2,3$. The actual interference between the APs depends on the transmission mode of each AP, i.e. whether it is downlink or uplink transmission and also on the specific location of each terminal associated with each AP.

In \cite{sumrateMax_jasc11} an interference metric was proposed where the amount of interference created by AP $i$ to AP $j$ in the downlink was given as the average channel gain between AP $i$ and AP $j$'s terminals times AP $i$'s power in the downlink. If we assume channel gain is a product of distance dependent path loss and a stochastic variable, the average channel gain can written as
\begin{equation}
\hat{g}_{ij} = \frac{1}{|\mathcal{T}_j|}\sum_{l\in \mathcal{T}_j} d_{il}^{-\alpha}z_{il},
\end{equation}
where $\mathcal{T}_j$ is the set of AP $j$'s terminals, $d_{il}$ is the distance between AP $i$ and AP $j$'s terminal $l$, $\alpha$ is the path loss exponent and $z_{il}$ is a stochastic variable which can be distributed according to a fading distribution.

However, obtaining the necessary information from each terminal to justify this metric is a difficult process which assumes accurate sensing capabilities at the terminals and frequent updates to include the stochastic variable. Instead, we assume each AP has some sense of the average distance between itself and its terminals so that it can compute an estimate of its coverage area. The channel gain between AP $i$ and AP $j$ then becomes
\begin{equation}
g_{ij} = (d_{ij}-r_j)^{-\alpha}z_{ij}.
\end{equation}
We can simplify the channel gain further by the expectation of the stochastic variable
\begin{equation}
\bar{g}_{ij} = (d_{ij}-r_j)^{-\alpha}\mathbb{E}[z_{ij}] = (d_{ij}-r_j)^{-\alpha}\mu_z.
\end{equation}
A similar approach was done in \cite{di-li} to model channel gain and interference between APs and the validity of the model was verified through simulations. 
Having defined the channel gain between AP $i$ and AP $j$, the total interference observed at AP $j$ on channel $k$ is given as
\begin{equation}
I_j(k) = \sum_{i\in \mathcal{N}, i\neq j} g_{ij}p_i(k).
\end{equation}

\subsection{Potential Games}
In game theory a game in normal form consists of a set of players $\Omega = \{1,...,N\}$, where each player $i$ has a strategy space $S_i\in S$ and an associated payoff function which gives each player a payoff $u_i(\mathbf{s}) = u_i(s_i,s_{-i})$ for a strategy profile $\mathbf{s}={s_1,...,s_N}$ with $s_1\in S_1, s_2\in S_2...$ \cite{fundenberg}. 
We denote such a game as
\begin{equation}
\Gamma = \{\mathcal{N},S,\{u_i\} \}.
\end{equation}
In our game the players are the APs. APs and players are used interchangeably in the rest of this paper. The strategies of the players is the choice of the different available channels. Thus the strategy profile $\mathbf{s}$ consists of allocated channels, e.g. $\mathbf{s} = f_1,f_2,f_1,f_2$ if the game consists of 4 players and player one has chosen channel 1, player 2 has chosen channel 2, player 3 has chosen channel 1 and player 4 has chosen channel 2. 
A Nash Equilibrium (NE) of the game is reached when no player has an incentive to unilaterally deviate from its current strategy. Thus, the strategy profile $\mathbf{s}^{*}$ is a NE, if only if
\begin{equation}
\forall i \in \mathcal{N}, u_i(s_i^{*},s_{-i}^{*})\geq u_i(s_i,s_{-i}^{*}), \forall s_i\in S_i\backslash s_i^{*}
\end{equation}

Let $P: \mathbf{s}\rightarrow \mathbb{R}$. We call $P$ a potential function of the game $\Gamma$ \cite{Monderer1996}. Specifically, we have that an exact potential game is characterized by , $\forall i\in \mathcal{N}$, for $s_i, s'_i \in S_i$
\begin{equation}
u_i(s_i,s_{-i}) - u_i(s'_i,s_{-i}) = P(s_i,s_{-i}) - P(s'_i,s_{-i})
\label{eq:exact-potential-game}
\end{equation}
where $P(\mathbf{s}) = P(s_i,s_{-i})$  is the potential function. An ordinal potential game is characterized by, $\forall i\in \Omega$, for $s_i, s'_i \in S_i$
\begin{equation}
u_i(s_i,s_{-i})>u_i(s'_i,s_{-i}) \Leftrightarrow P(s_i,s_{-i})>P(s'_i,s_{-i}).
\label{eq:ordinal-potential-game}
\end{equation}
Some important properties of potential games are summarized below:
\begin{proposition}{(Corollary 2.2 \cite{Monderer1996})}
Every finite ordinal potential game possesses a pure-strategy equilibrium.
\end{proposition}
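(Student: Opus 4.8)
The plan is to exploit finiteness together with the defining property of an ordinal potential. First I would observe that since the game is finite, the joint strategy space $S = \prod_{i\in\mathcal{N}} S_i$ is a finite set, so the potential function $P$ attains a maximum: there exists a profile $\mathbf{s}^{*}\in S$ with $P(\mathbf{s}^{*})\geq P(\mathbf{s})$ for all $\mathbf{s}\in S$.

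Next I would claim that any such maximizer $\mathbf{s}^{*}$ is a pure-strategy Nash equilibrium, and prove this by contradiction. If $\mathbf{s}^{*}$ were not a NE, then by the definition of NE some player $i$ would possess a unilateral deviation $s_i\in S_i\backslash s_i^{*}$ with $u_i(s_i,s_{-i}^{*}) > u_i(s_i^{*},s_{-i}^{*})$. Applying the ordinal-potential characterization (the $\Rightarrow$ direction of the defining equivalence) with $s'_i = s_i^{*}$ and the fixed opponent profile $s_{-i} = s_{-i}^{*}$ yields $P(s_i,s_{-i}^{*}) > P(s_i^{*},s_{-i}^{*}) = P(\mathbf{s}^{*})$, contradicting the maximality of $\mathbf{s}^{*}$. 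Hence no profitable unilateral deviation exists at $\mathbf{s}^{*}$, so it is a NE, which proves the proposition.

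I would also record the equivalent dynamic viewpoint, which is arguably more informative for the later development: consider any improvement path $\mathbf{s}^{0},\mathbf{s}^{1},\ldots$ in which each step changes one player's strategy so as to strictly raise that player's payoff. Along such a path $P$ is strictly increasing by the ordinal-potential property, so no profile can repeat; since $S$ is finite, every improvement path is finite (the finite improvement property), and its terminal profile admits no profitable unilateral deviation, i.e.\ it is a NE.

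There is essentially no serious obstacle here; the only point requiring care is to invoke the \emph{ordinal} (rather than exact) potential property in the correct direction — it is precisely the implication ``a strict payoff gain for the deviating player forces a strict increase of $P$'' that is needed — and to note that it is finiteness that guarantees a maximizer exists (for infinite strategy spaces one would instead need a compactness/continuity argument on $P$).
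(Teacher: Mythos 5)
Your argument is correct and is exactly the classical Monderer--Shapley proof that the paper cites (the paper itself gives no proof, only the reference): finiteness guarantees a maximizer of $P$, and the $\Rightarrow$ direction of the ordinal-potential equivalence shows any maximizer admits no profitable unilateral deviation. Your supplementary finite-improvement-path remark likewise matches Lemma 2.3 of the cited reference, so nothing further is needed.
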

\begin{proposition}{(Lemma 2.3 \cite{Monderer1996})}
Every finite ordinal potential game has the finite improvement path (FIP).
\end{proposition}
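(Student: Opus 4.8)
The plan is to argue directly from the definition of an ordinal potential function together with the finiteness of the game. Recall that an \emph{improvement path} is a (possibly infinite) sequence of strategy profiles $\mathbf{s}^0, \mathbf{s}^1, \mathbf{s}^2, \ldots$ in which each $\mathbf{s}^{t+1}$ is obtained from $\mathbf{s}^t$ by a single player $i_t$ unilaterally changing its strategy so as to strictly increase its own payoff; the game has the FIP when every such path is finite. So it suffices to exhibit a quantity that strictly and monotonically changes along any improvement path and that can only take finitely many values.

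The key step is to show that the ordinal potential $P$ is strictly increasing along every improvement path. Fix a transition from $\mathbf{s}^t = (s_{i_t}, s_{-i_t})$ to $\mathbf{s}^{t+1} = (s'_{i_t}, s_{-i_t})$, where only player $i_t$ has deviated. By the definition of an improvement path, $u_{i_t}(s'_{i_t}, s_{-i_t}) > u_{i_t}(s_{i_t}, s_{-i_t})$, and applying the ordinal potential property \eqref{eq:ordinal-potential-game} for player $i_t$ gives $P(s'_{i_t}, s_{-i_t}) > P(s_{i_t}, s_{-i_t})$, i.e. $P(\mathbf{s}^{t+1}) > P(\mathbf{s}^t)$. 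Iterating, any improvement path yields a strictly increasing sequence $P(\mathbf{s}^0) < P(\mathbf{s}^1) < P(\mathbf{s}^2) < \cdots$ of real numbers.

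From this monotonicity the conclusion is immediate. First, the profiles along the path are pairwise distinct (equal profiles would give equal potential values), so an improvement path can never cycle. Second, because the game is finite — finitely many players, each with a finite strategy space — the joint strategy space $S$ is finite, hence $P$ attains only finitely many values on $S$. A strictly increasing sequence taking values in a finite set must terminate, and in fact its length is at most $|S|$. Therefore every improvement path is finite, which is exactly the statement that the game has the FIP.

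I do not expect a genuine obstacle here; the proof is essentially bookkeeping. The two points that merit care are (i) invoking \eqref{eq:ordinal-potential-game} in the correct direction — a strict payoff increase for the deviating player implies a strict potential increase — and (ii) making explicit that ``finite game'' means $S$ itself is finite, which is what forces the strictly increasing potential sequence to terminate; the length bound $|S|$ holds uniformly and in particular does not depend on which player moves at a given step, so consecutive moves by the same player cause no difficulty.
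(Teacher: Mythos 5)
Your proof is correct and is exactly the standard argument from the cited reference (Monderer and Shapley, Lemma 2.3): the ordinal potential strictly increases along any improvement path, and a strictly increasing real sequence on the finite set $S$ must terminate. The paper itself states this proposition without proof, simply citing \cite{Monderer1996}, so there is nothing further to compare against.
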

\begin{proposition}{(Theorem 5.26 \cite{Neel2006})}
Given a potential game $\Gamma = \{\Omega,S,\{u_i\} \}$ with potential function $P$, global maximizers of
$V$ are Nash equilibria.
\end{proposition}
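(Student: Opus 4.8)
The plan is to argue by contradiction using only the forward implication in the definition of an ordinal potential function, equation~\eqref{eq:ordinal-potential-game}, since every exact potential game (equation~\eqref{eq:exact-potential-game}) is in particular an ordinal potential game. Writing $V = P$ for the potential function of $\Gamma$, let $\mathbf{s}^{*} = (s_i^{*}, s_{-i}^{*})$ be a global maximizer of $P$ over the strategy-profile space $S$, and aim to verify the Nash-equilibrium condition for every player at $\mathbf{s}^{*}$.

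Suppose $\mathbf{s}^{*}$ were not a Nash equilibrium. Then the negation of the NE condition yields a player $i \in \Omega$ and an alternative strategy $s_i \in S_i$, $s_i \neq s_i^{*}$, with $u_i(s_i, s_{-i}^{*}) > u_i(s_i^{*}, s_{-i}^{*})$. Instantiating \eqref{eq:ordinal-potential-game} with $s_i' = s_i^{*}$, this strict payoff increase forces $P(s_i, s_{-i}^{*}) > P(s_i^{*}, s_{-i}^{*})$. But $(s_i, s_{-i}^{*})$ is a legitimate profile in $S$, so this strictly larger potential value contradicts the maximality of $\mathbf{s}^{*}$. Hence no profitable unilateral deviation exists at $\mathbf{s}^{*}$, i.e.\ $\mathbf{s}^{*}$ is a Nash equilibrium.

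I expect no genuine obstacle here: the argument is one short contradiction. The only point deserving care is that it uses precisely the ``$\Rightarrow$'' direction of \eqref{eq:ordinal-potential-game} — a better reply for some player strictly raises the potential — and nothing else, so the conclusion needs neither finiteness of the game nor continuity or differentiability of $P$, and it applies a fortiori to exact potential games. If desired, one could close by remarking that the same reasoning shows that any strategy profile which is merely a local maximizer of $P$ with respect to single-player deviations is already a Nash equilibrium, which is the form in which the result is typically used.
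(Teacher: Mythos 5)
Your argument is correct. Note that the paper itself offers no proof of this proposition --- it is stated as a cited result (Theorem 5.26 of the Neel reference), with $V$ evidently a typo for the potential function $P$ --- so there is nothing to compare against; your one-step contradiction via the ``$\Rightarrow$'' direction of \eqref{eq:ordinal-potential-game} is exactly the standard argument, and your closing remark that any profile that is a maximizer of $P$ with respect to unilateral deviations is already a Nash equilibrium is also correct.
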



Thus, potential games have at least one pure-strategy NE which can be obtained through a best-response dynamic being adopted by each player. One of these NEs are also maximizers of the potential function. One down side is that if a game consists of multiple NEs, it is difficult to characterize which NE the game is converging to. In general this is dependent on the initial conditions. One way to overcome this problem is to try to design the payoff function such that only desirable strategy profiles are NEs.

Although potential games have convergence due to the FIP and existence of a NE, the better response dynamics between players can not be arbitrary to achieve this convergence. We define round-robin timing as players updating actions according to a pre-determined order, random timing as one player updating its action at a time, but the order is random, asynchronous timing as a subset of players updating their actions at a time and synchronous timing as all players updating their actions simultaneously. Then we have the following property
\begin{proposition}{\cite{Neel2006}}
A potential game converges under round-robin, random and asynchronous timing. It does not converge under synchronous timing.
\end{proposition}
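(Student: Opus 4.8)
The plan is to establish the three convergence claims and the single non-convergence claim separately, with the finite improvement path (FIP) property stated above (Lemma~2.3 of \cite{Monderer1996}) doing the heavy lifting.

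\textbf{Round-robin and random timing.} In both of these regimes exactly one player revises at each step, and it revises by a better response, switching from $s_i$ to $s_i'$ only when $u_i(s_i',s_{-i}) > u_i(s_i,s_{-i})$. By the (generalized) ordinal potential property this forces $P(s_i',s_{-i}) > P(s_i,s_{-i})$, so $P$ is strictly increasing along the play path. Because $S$ is finite, $P$ attains only finitely many values; hence no profile can recur and the path terminates after finitely many steps. A terminal profile admits no profitable unilateral deviation, i.e. it is a pure-strategy NE, which exists by Corollary~2.2 of \cite{Monderer1996}. Round-robin gives every player a turn by construction; under random timing each player is selected infinitely often with probability one, so the same conclusion holds almost surely.

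\textbf{Asynchronous timing.} Now a subset of players may revise simultaneously, and the monotonicity argument breaks: two players that each improve their own payoff against the \emph{current} profile need not jointly raise $P$, so cycles are a priori possible. The remedy is the randomness (inertia) built into the schedule: with probability bounded away from zero the active subset in a given round is a singleton, so along any sufficiently long horizon there is positive probability of a block of one-at-a-time better responses which, by the FIP argument, reaches a NE, after which no player moves. A Borel--Cantelli / absorbing-state argument then upgrades this to convergence to a NE with probability one. I would make the inertia hypothesis explicit and point to the corresponding statement in \cite{Neel2006}.

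\textbf{Synchronous timing.} Here it suffices to exhibit one finite potential game in which simultaneous best responses cycle. Take two APs sharing two channels, with mutual interference strong enough that each strictly prefers the channel the other is \emph{not} using. From ``both on channel $1$'' each best-responds to channel $2$; from ``both on channel $2$'' each best-responds to channel $1$; the play oscillates forever and never reaches either NE (the two profiles with the APs on different channels). Since this is an instance of the game studied in this paper, synchronous best response fails to converge in general. The crux is the asynchronous case — pinning down the precise activation/inertia hypotheses under which the statement holds and then converting FIP into an almost-sure claim — whereas round-robin and random follow directly from FIP and synchronous needs only the two-player counterexample.
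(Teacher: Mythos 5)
The paper does not actually prove this proposition; it is quoted verbatim from \cite{Neel2006} as a preliminary, so there is no in-paper argument to compare against. Your reconstruction is essentially the standard one and is sound where it can be: for round-robin and random timing the strict monotonicity of the (generalized ordinal) potential along unilateral better responses, plus finiteness of $S$, gives termination at a profile from which no selected player improves, and the scheduling assumption (every player activated infinitely often, almost surely in the random case) upgrades that to a Nash equilibrium; your two-player, two-channel ping-pong example is the canonical witness for the synchronous failure. The one place where you are doing more work than the source is the asynchronous case, and you are right to flag it: under the paper's definition (``a subset of players updating their actions at a time'') the claim is false for adversarially chosen subsets --- taking the subset to be all of $\mathcal{N}$ every round recovers the synchronous counterexample --- so some hypothesis is indispensable. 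In \cite{Neel2006} asynchronous timing is set up so that players revise at independent random instants, making simultaneous revisions a probability-zero event; under that reading the FIP argument applies directly and no inertia/Borel--Cantelli machinery is needed. Your inertia-based argument is a valid alternative route that covers the weaker assumption of merely having singleton activations with probability bounded away from zero, at the cost of only an almost-sure conclusion. Either way, you should state explicitly which activation model you are assuming, since the proposition is not true for arbitrary subset schedules.
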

In practice this is not a limitation as the probability of all players choosing their action at the same time based on the same information is almost zero when these plauers are not synchronized.

\section{Potential Game Formulation}

Due to the nature of interference maximizing the number of APs achieving their SINR requirements is NP-hard. This is easy to see as the problem is a non-convex (due to the interference coupling between APs) integer problem (since each AP can only select one channel). It is therefore unlikely that there exists a non-trivial potential game formulation where the potential function expresses the number of APs achieving their SINR requirements (a trivial formulation would be one involving global knowledge of all system parameters at each AP). Thus, instead of designing a potential function which maximizes the number of APs achieving their SINR requirements we use a potential function which tries to find a fair trade-off between the necessary transmit power at each AP and the interference each AP generates to neighboring APs.

AP $i$'s necessary power to achieve its SINR requirement on channel $k$ is 
\begin{equation}
p_{i}^{\text{nec}}(k) \geq \min\Bigl(\frac{\beta_i(N_0+I_i(k))}{g_{ii}(k)}, P_i^{\max}\Bigr).
\label{eq:nec-power}
\end{equation}
Let the set $\mathcal{R}_i$ be the set of APs known by AP $i$, such that AP $i$ has an estimate of $g_{ij}$ and which channel each AP in $\mathcal{R}_i$ transmits on. 
The utility function we propose is the following
\begin{align}
u_i(k) = &-\sum_{j\neq i,j\in\mathcal{N}}p_j(k)g_{ji}(k) \nonumber \\ &- \sum_{j\neq i, j\in \mathcal{R}_i}p^{\text{nec}}_i(k)\bar{g}_{ij}(k)l(p_j(k))
\label{eq:utility-function}
\end{align}
where $l(p_j(k)) = 1$ if $p_j(k)>0$ and $0$ otherwise. The first summation is the interference observed at AP $i$ and the second summation is the interference generated by AP $i$ at AP $i$'s known neighbors. We assume that the interference level can be measured and thus AP $i$ does not have to calculate the accumulated interference in each channel based on knowledge of the other transmitters. The measurement can either be done by the AP and the power of interference can be amplified to reflect the interference level at the edge of the coverage area, or the terminals can do measurements and report to the AP these measurements after which the AP calculates an average interference level for each channel. 

The transmit channel $k^{*}$ is then selected as
\begin{equation}
k^{*} = \max_{k\in \mathcal{K}} u_i(k)
\label{eq:k-star}
\end{equation}
and the power is set as
\begin{equation}
p_i(k) =  \left\lbrace\begin{array}{c c}{p^{\text{nec}}_i(k)} & {k=k^{*}}\\{0} & {\text{otherwise}}\end{array}\right.
\end{equation}

\subsection{Small Network}
Assume each AP $i$ knows the path loss between its self and all other APs in the network such that $\mathcal{R}_i = \mathcal{N}$ for all $i$. Then it was shown in \cite{Comaniciu} that
\begin{align}
P(\mathbf{s}) = \sum_{i\in \mathcal{N}}\bigl(&-\frac{1}{2}\sum_{j\neq i,j\in\mathcal{N}}p_j(k)g_{ji}(k) \nonumber \\ &-\frac{1}{2} \sum_{j\neq i, j\in \mathcal{N}}p^{\text{nec}}_i(k)\bar{g}_{ij}(k)l(p_j(k))\bigr)
\end{align}
is an exact potential function for the game with utility (\ref{eq:utility-function}).

\subsection{Large-scale Network}
In practice it is not possible to know how ones own transmission will affect all other APs in the network. Instead we try characterize the necessary neighbor APs AP $i$ needs to obtain information about to preserve the game as a potential game. We see from (\ref{eq:utility-function}) that an increase in $u_i$ due to power control can only occur when the transmit power at $i$ is decreased. Clearly this decrease of power at $i$ cannot decrease $u_j$ $\forall j\neq i$.

An increase in $u_i$ due to channel selection can occur in three ways:
\begin{enumerate}
\item increase due to the first sum in (\ref{eq:utility-function}): can change frequency to a channel with less interference, which leads to a decrease in power, which is in general good. However, this can increase the interference at some near-by APs.
\item increase due to the second sum in (\ref{eq:utility-function}): can change frequency to a channel which generates less interference. Good for near-by neighbors, but can result in an increase in power which can affect far-away APs.
\item both sums are decreased: this is generally beneficial.
\end{enumerate}
We see that an increase in $u_i$ can have various effects, except in the case where both sums are decreased. 

Let $\mathcal{N}_i$ be the set of neighbor APs closest to AP $i$ such that all channels are at least utilized by one AP in $\mathcal{N}_i$. Then we have the following result.
\begin{proposition}
The game formed by the utility function $u_i(k)$ converges if for all $i$ $\mathcal{N}_i\subseteq\mathcal{R}_i$.
\label{prop:resource-conv}
\end{proposition}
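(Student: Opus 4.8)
The plan is to take the exact potential $P(\mathbf{s})$ of the full‑information game from the previous subsection and show that, whenever $\mathcal{N}_i\subseteq\mathcal{R}_i$ for every $i$, it is a \emph{generalized ordinal} potential of the game with limited neighbor sets $\mathcal{R}_i$; that is, every unilateral deviation which strictly raises some $u_i(k)$ also strictly raises $P$. Since each AP's decision amounts to the choice of a single channel from $\mathcal{K}_i$ (the power then being pinned to $p_i^{\text{nec}}$), the decision space is finite, and the proof of the finite‑improvement property (Proposition~2) uses only that improving moves increase the potential; hence a generalized ordinal potential already yields that property, and Propositions~1 and 4 then give existence of, and convergence of better/best‑response play to, a pure NE under round‑robin, random or asynchronous timing.

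Deviations that keep the channel fixed require nothing new: as observed before the statement, such an improving deviation can only decrease $p_i$, and a decrease of $p_i$ strictly lowers every interference term $p_i(k)g_{ij}(k)$, each of which enters $P$ with a nonpositive coefficient (AP $i$'s own received‑interference term and the generated‑interference terms of the APs sharing $i$'s channel), so the move strictly raises $P$. For a channel switch of $i$ from $k$ to $k'$ with $u_i(k')>u_i(k)$, I would write $u_i=\tilde u_i+\Delta_i$, where $\tilde u_i$ is the utility $i$ would have under full information and $\Delta_i(k)=\sum_{j\in\mathcal{N}\setminus\mathcal{R}_i,\,j\neq i}p_i^{\text{nec}}(k)\,\bar g_{ij}(k)\,l(p_j(k))\ge 0$ is the interference $i$ would add on channel $k$ toward APs outside $\mathcal{R}_i$. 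Since $P$ is an exact potential of the full‑information game, $P(k',s_{-i})-P(k,s_{-i})=\tilde u_i(k')-\tilde u_i(k)=\bigl(u_i(k')-u_i(k)\bigr)-\bigl(\Delta_i(k')-\Delta_i(k)\bigr)$, so the whole proposition reduces to the single inequality $\Delta_i(k')-\Delta_i(k)<u_i(k')-u_i(k)$: the unseen harm the move creates on $k'$, net of what it removes on $k$, must stay strictly below the perceived gain. A convenient stronger target is $\Delta_i(k')\le\Delta_i(k)$.

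The hypothesis $\mathcal{N}_i\subseteq\mathcal{R}_i$ enters exactly here: by definition of $\mathcal{N}_i$, channel $k'$ is used by at least one AP known to $i$, and $\mathcal{R}_i$ then contains every AP at least as close to $i$ as that one, so each AP of $\mathcal{N}\setminus\mathcal{R}_i$ transmitting on $k'$ is strictly farther from $i$ — hence has strictly smaller $\bar g_{ij}$ — than an AP already charged in the second sum of $u_i(k')$. I would combine this with the three‑way case split stated before the proposition. In case (iii) the received interference on $k'$ is smaller, so $p_i^{\text{nec}}(k')\le p_i^{\text{nec}}(k)$ and the drop in generated interference carries over to $\tilde u_i$; in case (ii) the received term can grow only through a larger $p_i^{\text{nec}}(k')$, i.e. through larger measured interference on $k'$, so the generated‑interference drop must overtake it and again passes to $\tilde u_i$. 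Case (i) — $i$ profits by moving to a less‑interfered, lower‑power channel but may raise the interference it creates at nearby APs — is the step I expect to be the real obstacle, because it is the situation the hypothesis is meant to neutralize: the nearby APs on $k'$ now lie in $\mathcal{R}_i$ and their extra interference is already counted in $u_i(k')$, so the received‑interference saving that justifies the move exceeds it; the delicate point is to show that this same saving still dominates the residual $\Delta_i(k')-\Delta_i(k)$ due to the strictly farther unknown APs, which is what makes the case analysis rather than a one‑line bound necessary. Once that residual estimate is established in all three cases, $P$ is a generalized ordinal potential and the claimed convergence follows.
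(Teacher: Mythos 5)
Your reduction is set up correctly: writing $u_i=\tilde u_i+\Delta_i$ with $\Delta_i(k)=\sum_{j\in\mathcal{N}\setminus\mathcal{R}_i}p_i^{\text{nec}}(k)\bar g_{ij}(k)l(p_j(k))$ and using exactness of the full-information potential does reduce the claim to $\Delta_i(k')-\Delta_i(k)<u_i(k')-u_i(k)$. But that inequality is precisely the content of the proposition, and you never prove it. Your ``convenient stronger target'' $\Delta_i(k')\le\Delta_i(k)$ is unsupported, and the one mechanism you offer for it --- each unknown AP on $k'$ lies farther from $i$ than some known AP on $k'$, hence has smaller $\bar g_{ij}$ --- is a per-term comparison that says nothing about the \emph{sums}: channel $k'$ may be occupied by many APs just outside $\mathcal{R}_i$, so $\Delta_i(k')$ can exceed both $\Delta_i(k)$ and the perceived gain $u_i(k')-u_i(k)$. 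This is exactly the accumulated-interference objection the paper raises against protocol-model truncation, and it is the case you yourself label ``the real obstacle'' and leave open. As written, the argument is a proof plan with the decisive estimate missing.

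For comparison, the paper does not work with the global exact potential at all. It splits the observed interference into the part from $\mathcal{N}_i$ and an aggregate far-field term $I_i$, takes the potential to be the sum of the utilities of $i$ and the APs in $\mathcal{N}_i$ sharing its old and new channels, and tracks how a change $I_i\mapsto I_i-\epsilon$ simultaneously reduces $i$'s received interference and (multiplicatively, via $p_i'=\frac{\sum_l p_lg_{li}+I_i-\epsilon}{\sum_l p_lg_{li}+I_i}p_i$) the interference $i$ generates, so the far-field perturbation moves the utility and the local potential in the same direction; the hypothesis $\mathcal{N}_i\subseteq\mathcal{R}_i$ guarantees the near-field interactions on every channel are counted bidirectionally, and an extra assumption $|\mathcal{N}|\gg|\mathcal{N}_i|$ is invoked for the power-reduction step. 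Whatever one thinks of the rigor of that argument, it avoids having to bound your $\Delta_i(k')$ term, which your route cannot do without some additional assumption.
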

\begin{IEEEproof}
See Appendix \ref{app:proof-resource-conv}.
\end{IEEEproof}

\subsection{Scenarios where Selfish Converges}
The proposed utility function that each player wants to maximize is a trade-off between the interference observed by that player and the interference the player will create for the other players. A selfish approach would be to only choose a channel based on the interference observed by the player. However, analysis have shown that this approach does not converge in general \cite{di-li} and that the NEs reached by such selfish approach can be very inefficient \cite{Etkin2007}.

For our system model, there are cases where selfish behavior do converge. 
\begin{proposition}
Assume that each AP have the same coverage area, so that $r_i = r$, $\forall i \in \mathcal{N}$. Then a selfish best response dynamic by each player converges to a NE.
\label{prop:selfish}
\end{proposition}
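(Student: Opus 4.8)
The plan is to show that, under the homogeneity assumption $r_i=r$ for all $i$, the selfish game has the finite improvement property, so that convergence to a pure Nash equilibrium follows from Lemma 2.3 of \cite{Monderer1996} together with the timing proposition of \cite{Neel2006}. First I would make the selfish dynamic explicit: a selfish AP $i$ drops the second sum in (\ref{eq:utility-function}) and chooses $k^{*}=\arg\min_{k\in\mathcal{K}}I_i(k)$, then sets $p_i=p_i^{\text{nec}}(k^{*})$ according to (\ref{eq:nec-power}). Hence the only strategic degree of freedom is the channel and the strategy space $\mathcal{K}$ is finite, so it suffices to produce a (generalized ordinal) potential for the induced finite channel-selection game.

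The second step is to cash in the hypothesis $r_i=r$: the expectation-simplified gains become symmetric, $\bar g_{ij}=(d_{ij}-r)^{-\alpha}\mu_z=\bar g_{ji}$, so the path-loss coefficient governing the interaction of any co-channel pair $\{i,j\}$ is symmetric. This is exactly the symmetry that the exact potential of the ``Small Network'' subsection (the construction of \cite{Comaniciu}) exploited, and it invites trying a singleton-congestion-game potential of the form
\begin{equation}
P(\mathbf{s})=-\sum_{k\in\mathcal{K}}\,\sum_{\substack{\{i,j\}\subseteq\mathcal{N}\\ s_i=s_j=k}} w_{ij},
\end{equation}
where $s_i$ denotes the channel of AP $i$ and $w_{ij}=w_{ji}$ is the mutual co-channel interference of $i$ and $j$, and then correcting this ansatz for the transmit powers. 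For a congestion game of this form a unilateral channel change by AP $i$ moves $P$ only through the pairs that contain $i$, and by exactly the change in $i$'s received co-channel interference; hence $P$ and the selfish payoff of $i$ rise and fall together, $P$ is an ordinal potential, and finite improvement paths and convergence follow.

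The step I expect to be the genuine obstacle is reconciling this picture with power control. Unlike in a textbook congestion game, the ``weight'' an AP contributes on a channel is its transmit power $p_i^{\text{nec}}$, which by (\ref{eq:nec-power}) is itself a function of the interference that AP currently sees; so when AP $i$ switches channels it not only relocates its own weight but also perturbs $I_j$ --- and therefore $p_j^{\text{nec}}$ --- for every co-channel AP $j$ on the old and on the new channel, and these perturbations propagate into still other pairs, so the natural choice $w_{ij}=p_ip_j\bar g_{ij}$ is not constant along an improvement path. To close this gap I would argue on two timescales: between channel updates the co-channel powers relax to the (unique) equilibrium of the associated affine, nonnegative, monotone power-control map, a relaxation that is itself monotone in the injected interference; and then I would check, using $\bar g_{ij}=\bar g_{ji}$ to cancel the cross terms, that evaluated at these relaxed powers a single channel switch by AP $i$ changes $P$ in the same direction as $I_i(\cdot)$ decreases. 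Proving that the induced power re-adjustments cannot flip the sign of $\Delta P$ is the crux; the remainder is bookkeeping together with the finiteness of the channel-assignment space and the potential-game properties recalled in Section~\ref{sec:preliminaries}.
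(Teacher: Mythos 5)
Your potential is the paper's potential: with $w_{ij}=\bar g_{ij}p_ip_j$ your pairwise co-channel sum is, up to a factor of $2$ coming from the symmetry $\bar g_{ij}=\bar g_{ji}$, exactly the function $P(\mathbf{s})=\sum_i u'_i$ that the paper builds from the modified utility $u'_i(k)=\sum_{j\neq i}g_{ji}P_j(k)P_i^{\text{nec}}(k)$, and the role you assign to the homogeneity hypothesis $r_i=r$ (making the interaction weights symmetric so that a unilateral channel change moves the pairwise sum only through the pairs containing $i$) is precisely how the paper uses it. So the approach is the same. Where you diverge is at the step you yourself flag as the crux, and there your plan both overcomplicates and underdelivers. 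The paper does not need a two-timescale relaxation of the co-channel powers: it checks the generalized ordinal potential condition for a single unilateral deviation, in which only AP $i$'s channel and power change and every other AP's power is frozen at its current value, so $w_{lj}$ for pairs not containing $i$ is constant by definition of a unilateral move. The only nontrivial point is then the one you gloss over rather than the one you dwell on: since the selfish player minimizes observed interference $u_i(k)=\sum_{j\neq i}g_{ji}P_j(k)$ while the potential is built from the product $u_i(k)\,P_i^{\text{nec}}(k)$, one must check that an improvement in $u_i$ forces an improvement in $u'_i$; this holds because $P_i^{\text{nec}}(k)$ is (by (\ref{eq:nec-power})) increasing in $I_i(k)$, so both factors decrease together and the sign of the improvement is preserved --- this is exactly why the result is a \emph{generalized ordinal} rather than exact potential game. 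The subsequent power re-adjustments by the other APs that worry you are separate improvement steps in the asynchronous dynamic, and the paper disposes of the downward ones by its earlier observation that a power decrease at one AP cannot hurt any other AP's utility or the potential. (Your instinct is not entirely misplaced, though: an AP on the newly joined channel that must \emph{raise} its power to hold its SINR target is a move the paper's argument does not really cover, and your proposal inherits that same unaddressed case rather than resolving it.) As written, your proof stops at a declared obstacle and sketches machinery (monotone power-control fixed points, relaxed-power evaluation of $\Delta P$) that is never carried out, so it is a plan consistent with the paper's argument rather than a proof.
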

\begin{IEEEproof}
See Appendix \ref{app:proof-propselfish}.
\end{IEEEproof}

Both Proposition \ref{prop:resource-conv} and Proposition \ref{prop:selfish} proves convergence by showing that the games are generalized ordinal potential games. We can simplify the scenario of Proposition \ref{prop:selfish} to the case where there is no power control and every AP transmits at the same maximum power $P$. In this case the game is a exact potential game.
\begin{proposition}
Assume each AP transmits at the same transmit power P and that there is no power control. Further assume that each AP have the same coverage area, so that $r_i = r$ $\forall i \in \mathcal{N}$. Then a selfish best response dynamic by each player converges to a NE.
\label{prop:selfish2}
\end{proposition}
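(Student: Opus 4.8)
The plan is to show that, under the stated hypotheses, the selfish game is a \emph{finite exact} potential game and then invoke the cited properties of such games. Since there is no power control, every transmitting AP uses the fixed power $P$, so the only decision of player $i$ is which channel $s_i\in\mathcal{K}_i$ to occupy, and the joint strategy space $\prod_{i}\mathcal{K}_i$ is finite. For a profile $\mathbf{s}$ the selfish payoff of $i$ is minus the interference it measures, which with no power control is
\begin{equation}
u_i(\mathbf{s}) = -P\sum_{\substack{j\in\mathcal{N},\,j\neq i\\ s_j=s_i}}\bar{g}_{ji}.
\label{eq:selfish-util}
\end{equation}
The structural fact I would lean on is that equal coverage radii make the pairwise gains symmetric: with $r_i=r$ and $d_{ij}=d_{ji}$ we get $\bar g_{ij}=(d_{ij}-r)^{-\alpha}\mu_z=(d_{ji}-r)^{-\alpha}\mu_z=\bar g_{ji}$ (this is exactly where the assumption is used).

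Next I would propose the candidate potential
\begin{equation}
V(\mathbf{s}) = -\frac{P}{2}\sum_{i\in\mathcal{N}}\sum_{\substack{j\in\mathcal{N},\,j\neq i\\ s_j=s_i}}\bar{g}_{ij},
\label{eq:selfish-pot}
\end{equation}
i.e. $-P$ times the total gain summed over unordered co-channel pairs of APs, which is well defined precisely because of the symmetry above. To check \eqref{eq:exact-potential-game}, fix $s_{-i}$ and let player $i$ move from channel $k$ to channel $k'$. In $V$ only the terms pairing $i$ with another AP change value, and each such pair $\{i,j\}$ is counted twice in the double sum (once as $(i,j)$, once as $(j,i)$ with equal value $\bar g_{ij}=\bar g_{ji}$), so the prefactor $\tfrac12$ cancels and $V(k',s_{-i})-V(k,s_{-i})$ equals $-P\sum_{j\neq i:\,s_j=k'}\bar g_{ij}+P\sum_{j\neq i:\,s_j=k}\bar g_{ij}$, which by \eqref{eq:selfish-util} is exactly $u_i(k',s_{-i})-u_i(k,s_{-i})$. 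Hence $V$ is an exact potential function for the selfish game.

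Finally I would conclude from the general theory: a finite exact potential game is in particular a finite ordinal potential game, so it has at least one pure-strategy Nash equilibrium and possesses the finite improvement path (Corollary~2.2 and Lemma~2.3 of \cite{Monderer1996}); therefore, under round-robin, random or asynchronous timing, every sequence of selfish best responses is finite and terminates at a NE (cf. \cite{Neel2006}). I do not expect a genuine obstacle here — the single delicate point is the symmetry $\bar g_{ij}=\bar g_{ji}$, since it is what the equal-radius assumption buys and what makes the unordered-pair potential \eqref{eq:selfish-pot} consistent. Without it one would at best recover a generalized ordinal potential, which is the situation treated in Proposition~\ref{prop:selfish}; the present statement is the clean special case in which that ordinal potential can be upgraded to an exact one.
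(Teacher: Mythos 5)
Your proof is correct and follows essentially the same route as the paper: the equal-radius assumption gives the symmetry $\bar g_{ij}=\bar g_{ji}$, which is exactly what makes the selfish game an exact potential game, and convergence then follows from the finite-improvement-path property. The only difference is one of completeness — the paper's own proof merely asserts that symmetry implies an exact potential game, whereas you explicitly exhibit the unordered-pair potential $V$ and verify the exact-potential identity, which is a welcome strengthening rather than a departure.
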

\begin{IEEEproof}
With no power control and the same coverage radius for all APs the interference created by AP $i$ to AP $j$ is the same as interference created by AP $j$ to AP $i$. Thus $g_{ij} = g_{ji}$ and $I_{ij} = I_{ji}$. Thus the accumulated interference observed at AP $i$ is the same as the amount of interference AP $i$ will create at the other APs, and therefore the channel which has the least interference at AP $i$ is also the channel that will generate the least amount of interference at the other APs. In this case, selfish behavior forms an exact potential game which converges.
\end{IEEEproof}

\subsection{Local Optimality}
From Proposition \ref{prop:selfish2} we saw that if there is no power control and all APs have the same coverage area the channel with least interference at AP $i$ is also the channel on which AP $i$ will generate the least interference. Consider the general case, with power control and different coverage areas, if the channel with least interference at AP $i$ is also the channel on which AP $i$ will generate the least interference, how "optimal" is this choice of channel from a system performance perspective?

\begin{proposition}
If the selected transmit power at an AP is the channel with both the least amount of observed interference at that AP and channel which will generate the least amount of interference to the other APs, this is the optimal choice for the AP without global knowledge. 
\end{proposition}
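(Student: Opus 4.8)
The plan is to unpack the proposed utility (\ref{eq:utility-function}) into its two additive terms and to observe that the hypothesis forces the two minimizations to agree on one channel, which is then simultaneously the utility maximizer and a weakly Pareto-dominant choice among all channels AP $i$ can evaluate.

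First I would write $u_i(k) = -I_i(k) - G_i(k)$, where $I_i(k) = \sum_{j\neq i,\, j\in\mathcal{N}} p_j(k)g_{ji}(k)$ is the interference AP $i$ observes on channel $k$ (the first sum) and $G_i(k) = p^{\text{nec}}_i(k)\sum_{j\neq i,\, j\in\mathcal{R}_i}\bar{g}_{ij}(k)l(p_j(k))$ is the interference AP $i$ generates on channel $k$ at the APs it knows (the second sum). The hypothesis is that some channel $k^{\star}$ satisfies $I_i(k^{\star})\leq I_i(k)$ and $G_i(k^{\star})\leq G_i(k)$ for every $k\in\mathcal{K}$. Adding these two inequalities and negating gives $u_i(k^{\star})\geq u_i(k)$ for all $k$, so $k^{\star}$ is precisely the channel picked by the selection rule (\ref{eq:k-star}); in particular the best-response channel is well defined here and consistent with the stated hypothesis.

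Next I would argue that $k^{\star}$ is the best AP $i$ can do given the information in $\mathcal{R}_i$. That information lets AP $i$ assess exactly two consequences of its decision: the effect on itself, summarized by $I_i(k)$ (which through (\ref{eq:nec-power}) also fixes the required power $p^{\text{nec}}_i(k)$), and the aggregate interference it injects at the APs it knows, summarized by $G_i(k)$. Since $k^{\star}$ minimizes both, any other channel $k$ has $I_i(k)\geq I_i(k^{\star})$ --- AP $i$ is weakly worse off and, by (\ref{eq:nec-power}), needs at least as much power --- and $G_i(k)\geq G_i(k^{\star})$ --- AP $i$'s contribution at each known neighbor is weakly larger. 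Moreover, by the observation made just before Proposition \ref{prop:resource-conv}, a decrease in AP $i$'s power cannot decrease $u_j$ for any $j\neq i$, so the smaller power that $k^{\star}$ induces is also weakly beneficial for the APs outside $\mathcal{R}_i$. Hence $k^{\star}$ weakly Pareto-dominates every alternative channel with respect to everything AP $i$ is able to evaluate, so no channel can be strictly preferable on the available information; this is the sense in which $k^{\star}$ is the optimal choice without global knowledge.

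I expect the only genuine obstacle to be making ``optimal without global knowledge'' precise, since the algebraic fact that a common minimizer of both sums maximizes $u_i$ is immediate. The work is therefore to tie the optimality claim to the weak Pareto-dominance of $k^{\star}$ among the information-feasible channels and to invoke the earlier monotonicity remark so that ``harmless to known neighbors'' extends to ``harmless to all APs.'' If a system-level reading is wanted instead, I would add that when $\mathcal{R}_i=\mathcal{N}$ the small-network potential $P(\mathbf{s})$ applies, so increasing $u_i$ increases $P$, and the channel that decreases both sums strictly improves the network-wide interference metric that $P$ encodes.
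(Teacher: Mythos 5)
Your decomposition $u_i(k) = -I_i(k) - G_i(k)$ and the observation that a common minimizer of the two sums maximizes $u_i$ and is therefore the channel selected by (\ref{eq:k-star}) is exactly the content of the paper's (quite informal) proof, which argues that the least-interference channel requires the least transmit power and hence, when it also minimizes the generated-interference sum, is the channel AP $i$ should select on local information alone. Your weak-Pareto-dominance framing of ``optimal without global knowledge'' is a reasonable, and somewhat sharper, way of making the paper's prose precise.

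However, one step overreaches. You invoke the remark preceding Proposition \ref{prop:resource-conv} --- that a decrease in AP $i$'s transmit power cannot decrease any $u_j$ --- to conclude that the smaller power induced by $k^{\star}$ is ``weakly beneficial for the APs outside $\mathcal{R}_i$.'' That remark concerns power control on a \emph{fixed} channel; here you are comparing two different channels. An unknown AP listening on $k^{\star}$ receives interference $p^{\text{nec}}_i(k^{\star})g_{ij}>0$ from AP $i$ under the choice $k^{\star}$ but zero under the alternative $k$, so it is strictly worse off, lower power notwithstanding. Hence $k^{\star}$ does not weakly Pareto-dominate the alternatives over the full population, only with respect to the quantities AP $i$ can evaluate. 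The paper's own proof makes precisely this point: the locally optimal channel can push an unknown AP past the interference it can tolerate at maximum power, and detecting this would require global information --- which is exactly why the proposition is qualified by ``without global knowledge.'' If you drop the sentence about APs outside $\mathcal{R}_i$ and rest the optimality claim entirely on dominance with respect to the information in $\mathcal{R}_i$ (as the final sentence of that paragraph of yours already does), the argument is sound and coincides with the paper's.
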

\begin{proof}
Since the channel with least interference is the channel that requires least transmit power, this channel will lead to the least amount of generated interference from an AP to the rest of the set of APs if this channel also minimizes $\sum_{j\neq i, j\in \mathcal{N}} g_{ij}l(p_j(k))$. Thus if AP $i$ should be allowed to transmit, this would be the channel to select. However, from a system performance perspective, where the goal is to maximize the number of APs achieving their SINR requirement, there could be a case where the AP should choose a different channel. For instance, from a centralized view, the selection of the channel with least interference and generated interference can lead another AP to not achieve its SINR requirement, if this AP is operating at the limit of  maximum transmit power and interference it can tolerate, whereas choosing a non-optimal channel (which does not maximize (\ref{eq:utility-function}) would not do this. However, for AP to be able to determine this would require global information.
\end{proof}

\section{A Peer-to-Peer Discovery Mechanism for Large-scale Networks}
\label{sec:p2p}
How neighbors are discovered and how the information is exchanged does not matter for the resource allocation schemes presented in the two previous sections, as long as the information is obtained in some way. In this section we briefly describe a peer-to-peer discovery mechanism suited for neighbor discovery and information exchange that we have used in the simulation results \cite{Skjegstad2012a}.

\begin{figure}[t]
\centering
\includegraphics[width = 0.9\columnwidth]{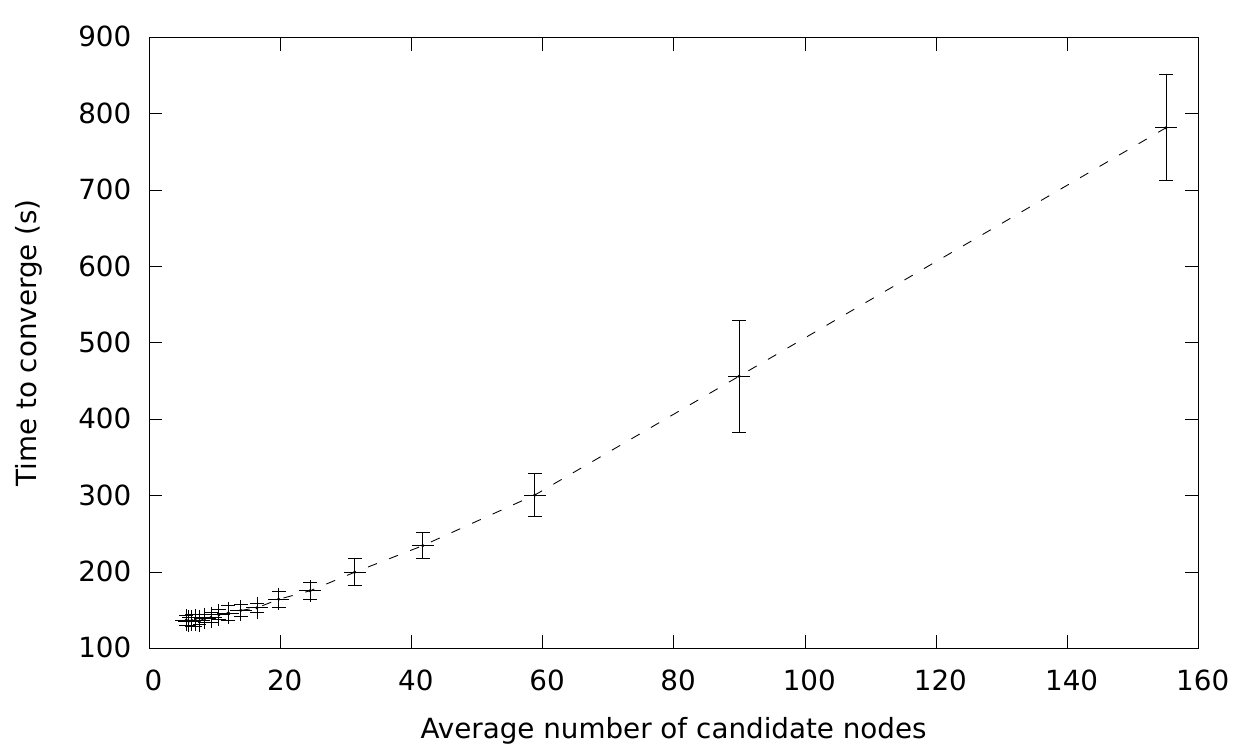}
\caption{Average time to reach stable state for varying neighbor node density in random topologies with standard deviation.}
\label{fig:discovery-convergence}
\end{figure}

The discovery mechanism consists of two parts: the first mechanism provides a random sample of all APs participating in the network. The second mechanism selects the most important APs from the random sample and exchanges information with them. As described in Section \ref{sec:preliminaries}, each AP has a coverage radius. We now assume that an AP has a coordination range, which is used to determine the importance of neighboring APs. Important neighboring APs are then determined by the extent to which the coordination areas overlap. When two nodes have overlapping coordination areas, they are defined as candidate nodes to each other. Note that this is just a rough estimate of whether two APs will interfere with each other.
\begin{table}
\caption{Evaluation parameters for the resource allocation algorithm.}
	\centering
	\begin{tabular}{c|c}
	\hline
		Parameter & Value \\ \hline \hline
		Number of Channels ($|\mathcal{K}|$) & 13 \\ \hline
		Maximum transmit power & 100 mW \\ \hline
		SINR requirement & $\sim U(1,6)$ \\ \hline
		Thermal noise power & $10^{-8}$ W \\ \hline
		Path loss between node $i$ and $j$, $g_{i,j}$ & $d(i,j)^{-3}\times z_{ij}$ \\ \hline
		Number of APs considered & 305 \\ \hline
		Maximum coverage radius ($d_{\text{transmit}}$) & 20,40 m\\ \hline
		Maximum number of iterations & 50 \\ \hline
	\end{tabular}
	\label{tab:RAparams}	
\end{table}
The medium that convey the messages of the discovery mechanism is not specified in \cite{Skjegstad2012a}. However, assuming the transmitters are access points or other terminals that provide Internet connectivity, one option is that the peer-to-peer network operate over the Internet. When a node connects to the P2P network, it begins to sample the network randomly. As it obtains knowledge about other nodes in the network, it exchanges information with these nodes while still sampling the network randomly. As time progresses, more and more candidate nodes are discovered until all APs know about their candidate nodes. Fig. \ref{fig:discovery-convergence} shows the average time it takes all APs to discover all their candidate nodes with standard deviation for increasing network sizes (or increasing number of neighbors). Although it may take some time to converge, convergence is guaranteed given enough time.

\section{Performance Evaluation}
\subsection{Simulation Setup}
To evaluate the resource allocation algorithm we have simulated a topology of APs based on the municipality of Tynset in Norway. We use household locations as the basis for AP location. We consider an area of $1\times 1$ km which consists of 305 APs. We set $|\mathcal{K}| = 13$ to mimic the number of available channels in 802.11b/g/n. Unlike Wi-Fi, we assume all channels are orthogonal. Each AP has a maximum transmit power of 100 mW, which is the maximum EIRP for Wi-Fi in Europe. It is also defined as the maximum transmit power for mobile white space devices in the US \cite{fcc4}. The signal-to-interference-plus-noise ratio (SINR) requirement for each AP is uniformly distributed between $1$ and $10$. The coverage area of each AP is uniformly distributed between $3$ and $d_{\text{transmit}}$, where $d_{\text{transmit}}$ is either 25 or 50. The coordination range used by the discovery mechanism is set to two times the maximum coverage radius. Thermal noise power is set to $10^{-8}$. The path loss between a transmitter $i$ and receiver $j$, $g_{ij}(k)$, is given in our simulations as $g_{i,j} = d(i,j)^{-3}\times z_{ij}$, where $z_{ij}$ is distributed according to a log-normal distribution with 0 dB mean and 8 dB standard deviation. The simulation parameters are summarized in Table \ref{tab:RAparams}. Note that even though we have included shadowing in the simulations, knowledge of the shadowing gain between between AP $i$ and AP $j$ is assumed unknown in the estimation of the last sum in (\ref{eq:utility-function}). Finally, an AP is defined as satisfied if its SINR at its coverage edge satisfies the SINR target. 

We evaluate the resource allocation algorithm in light of both its quality as a resource allocation algorithm and how it progresses based on neighbor discovery from the discovery mechanism in Section \ref{sec:p2p}. As very little information is known at the start-up of a network, the resource allocation algorithm can not make correct decisions. However, as time progresses, the discovery mechanism discovers more and more neighbors and the performance of the proposed resource allocation algorithm increases. For all plots, the results are plotted against time as the discovery protocol finds more candidate nodes (relevant neighboring APs). Once a candidate node is found, the candidate node exchanges location knowledge along with its estimated coverage radius. It will also continuously send information about its current transmit channel.

\begin{figure}[t!]
\centering
\includegraphics[width = 1.02\columnwidth]{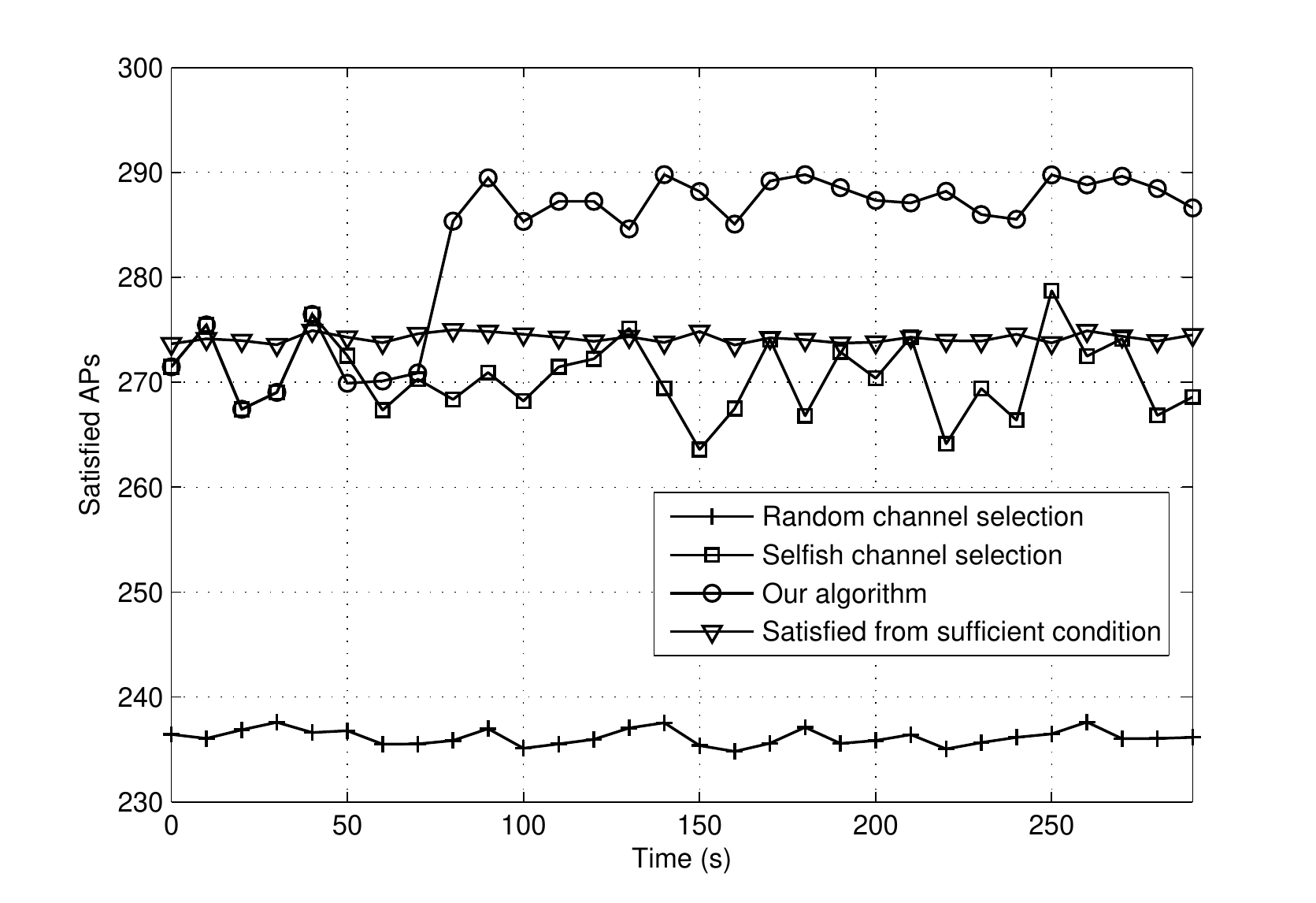}
\caption{Number of satisfied APs as a function of time in Tynset with maximum coverage radius of 20 meters.}
\label{fig:res-sat-tyn-r1-25-r2-50}
\end{figure}

The resource allocation algorithm is compared to two simple allocation schemes and centralized solution. The two allocation schemes are \emph{random} and \emph{selfish}.The random allocation scheme selects a frequency randomly from the set of available channels. The selfish scheme selects the channel with least interference and transmits with a power just large enough to achieve its SINR requirement. Both our proposed algorithm and the selfish algorithm starts with a random allocation of channels.

As finding the maximum number of users that achieve their SINR is NP-hard, comparing our algorithm to an optimal solution is not possible. However, in \cite{Ellingsater2012:a} a simple sufficient condition is shown to exists for which an efficient heuristic algorithm can find the solution. Thus, the solution value from this heuristic algorithm gives a lower bound on the number of APs that can theoretically achieve their SINR simultaneously. The drawback with this approach is that it assumes global knowledge of all system parameters and requires APs not achieving their SINR to power off.

\begin{figure}[t!]
\centering
\includegraphics[width = 0.7\columnwidth, angle=90]{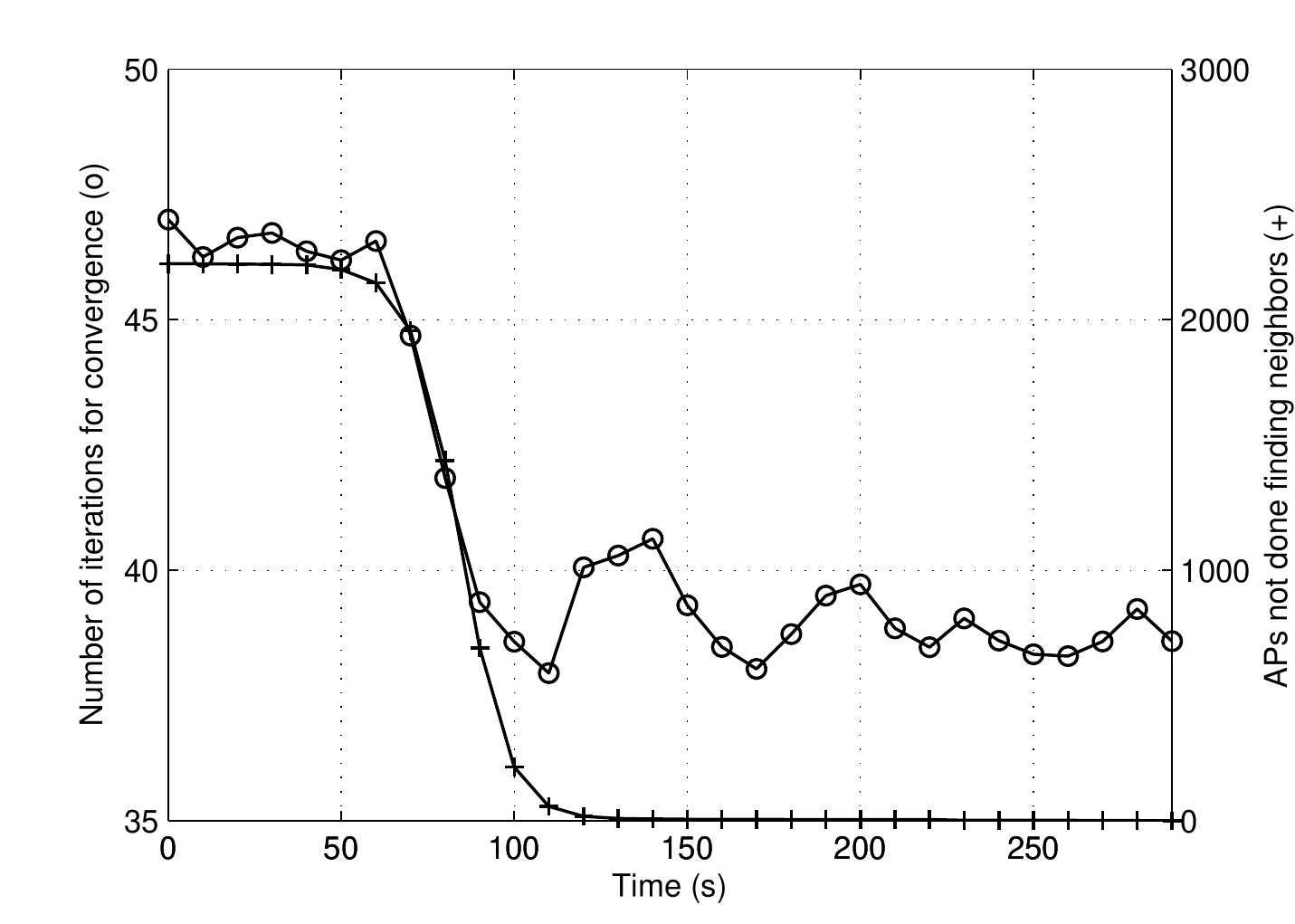}
\caption{Number of iterations until convergence against time in Tynset with maximum coverage radius of 20 meters}
\label{fig:res-conv-tyn-r1-25-r2-50}
\end{figure}



\subsection{Simulation Results}
In Fig. \ref{fig:res-sat-tyn-r1-25-r2-50}, the number of satisfied APs is given in Tynset with a maximum coverage radius of 20 meters. As expected, the random allocation and selfish scheme behave the same over time, as their performance is not related to neighbor discovery. The same holds for the centralized solution. For our algorithm we see a clear increase in the number of satisfied APs between 50 to 100 seconds. The increase has two dependent causes: 1) From 50 to 100 seconds the discovery mechanism goes from a state where very few APs know all their neighbors to a state where most APs know all their neighbors, as seen in Fig. \ref{fig:res-conv-tyn-r1-25-r2-50}. 2) From Fig. \ref{fig:res-conv-tyn-r1-25-r2-50} we also see that our algorithm goes from a high average of 47 to 38 iterations. The average of 47 is close to the maximum number of iterations (50) we set for the algorithm and is due to the algorithm not being able to converge when too little information is available about neighbors. When more information becomes available the algorithm converges in less than 50 iterations for most instances and the average goes down. Note that as some APs are without neighbors there are fewer nodes with missing neighbors than the total number of APs in Fig. \ref{fig:res-conv-tyn-r1-25-r2-50}.

In essence, this shows that in order for our algorithm to perform well, the algorithm must converge. For our algorithm to converge, we have to know a sufficiently amount of neighbors such that at least one neighbor utilizes each channel (by Proposition \ref{prop:resource-conv}). This seems to occur at about 100 seconds.

\begin{figure}[t!]
\centering
\includegraphics[width = 1.02\columnwidth]{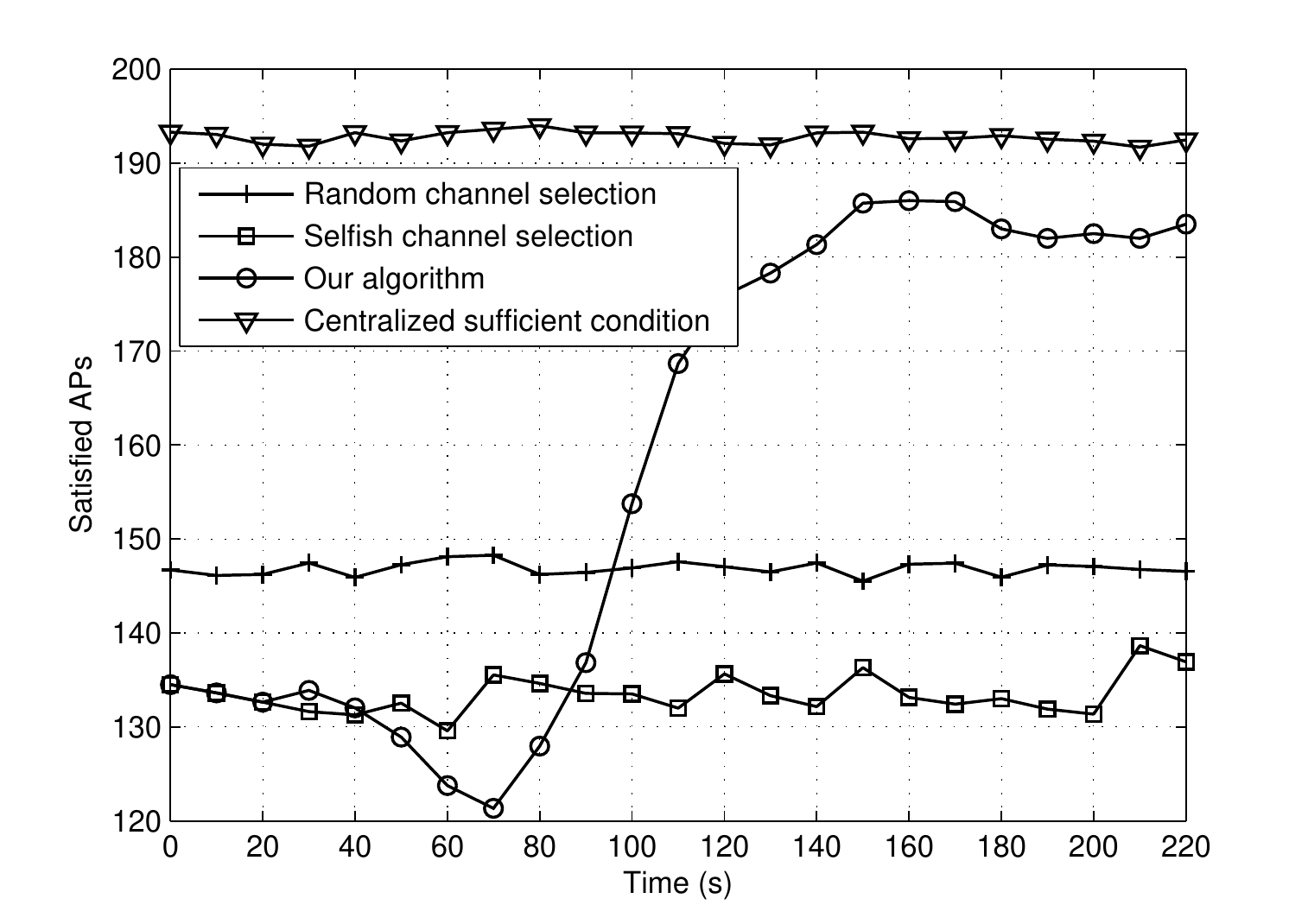}
\caption{Number of satisfied APs as a function of time in Tynset with maximum coverage radius of 40 m.}
\label{fig:res-sat-tyn-r1-50-r2-100}
\end{figure}

The importance of convergence is further emphasized in Fig. \ref{fig:res-sat-tyn-r1-50-r2-100} and Fig. \ref{fig:res-conv-tyn-r1-50-r2-100} where the maximum coordination range is increased to 80 meters and the maximum coverage radius is 40 meters. As some APs do not have candidate APs, the number of initial APs with missing candidate APs is also here lower than the total number of APs. Our algorithm starts off with performance similar to the selfish approach. At 50 seconds the performance drops below the selfish selection scheme, before it starts to improve and settles at around 150 seconds. The convergence plot in Fig. \ref{fig:res-conv-tyn-r1-50-r2-100} is similar to Fig. \ref{fig:res-conv-tyn-r1-25-r2-50} and we can again conclude that the algorithm performs well when the algorithm is able to converge. 

The drop in performance from 50 to 70 seconds can be explained as follows: from 50 to 70 seconds some APs find some of their relevant neighboring APs. However, due to the large coverage radius, and thus large coordination radius, some of these candidate nodes can be far away, while other more important candidate nodes are not found. Thus the utility function (\ref{eq:utility-function}) takes the well-being of far away neighbors into account, but not the ones close by. This drop in performance did not occur in Fig. \ref{fig:res-sat-tyn-r1-25-r2-50}, due to the small coordination radius.

In Fig. \ref{fig:res-sat-tyn-r1-25-r2-50} our algorithm performed better than the centralized solution. As the centralized solution is obtained from a sufficient condition on optimality it is not an upper bound and can surpassed. However, in  Fig. \ref{fig:res-sat-tyn-r1-50-r2-100} our algorithm does not surpass the centralized solution, but comes close as discovery mechanism discovers more neighbors. An explanation for why algorithm does not achieve the same or better as the sufficient condition is that in this scenario a lot fewer APs can achieve their SINR requirements, since the maximum coverage radius is increased from 20 to 40 m. To achieve the global optimum in this case some APs must turn of their power in order for APs to achieve their SINR targets. As our approach does not provide any access control all APs are allowed to transmit and the therefore the global optimum cannot be achieved by our algorithm in this case.
\begin{figure}[t!]
\centering
\includegraphics[width = 0.7\columnwidth, angle=90]{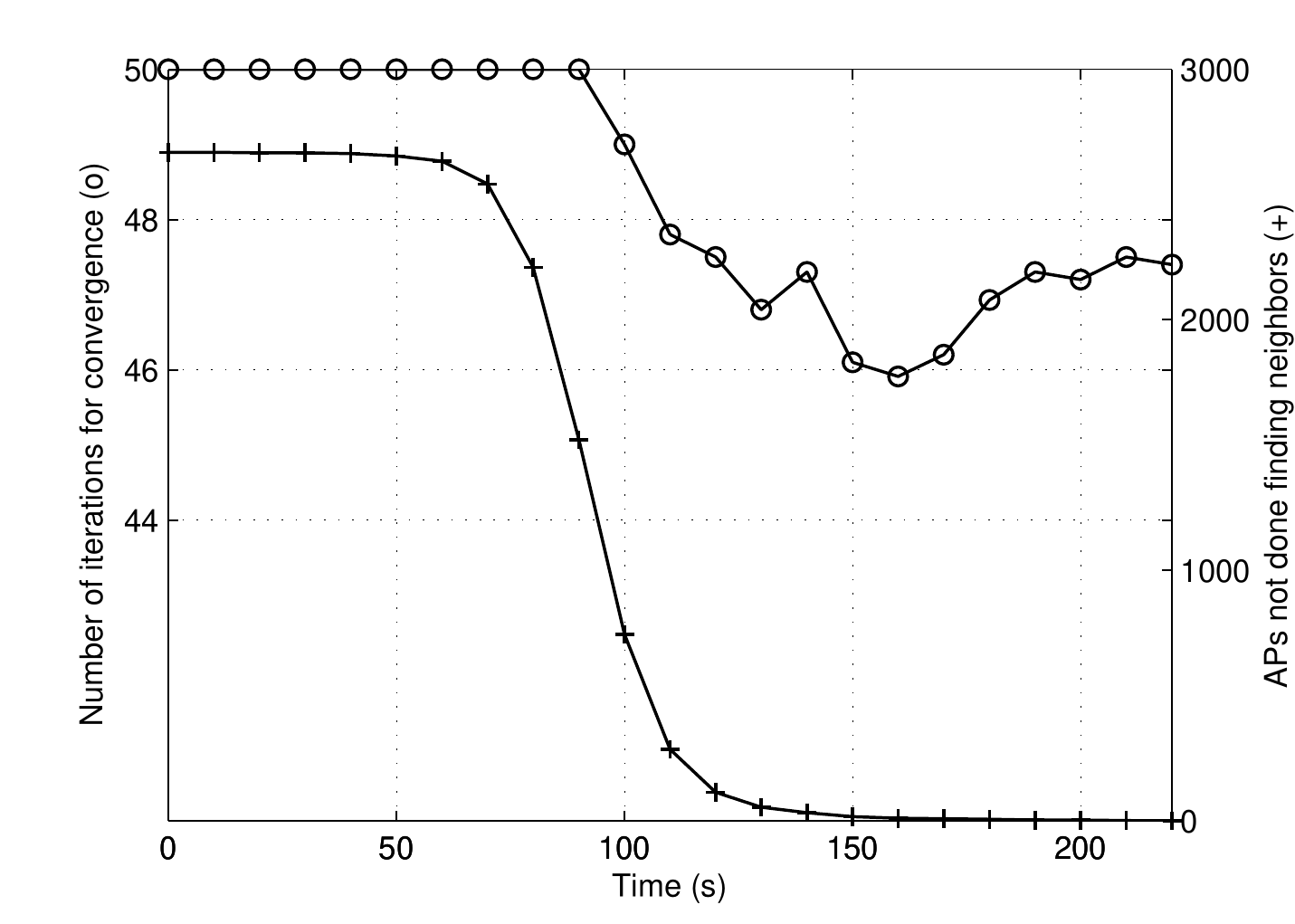}
\caption{Number of iterations until convergence against time in Tynset with maximum coverage radius of 40 m.}
\label{fig:res-conv-tyn-r1-50-r2-100}
\end{figure}
An important aspect of distributed resource allocation schemes in large scale networks is the \textit{domino effect} \cite{Ellingsater2012}, which is the change in the network due to a change in resource allocation at one or more APs. It is desirable that this effect is as small as possible, i.e. as few APs as possible should have to update their allocation due to a change at another AP. In general, changing power levels at a transmitter only leads to a minor domino effect as a large change in one node affects the other nodes in a much smaller scale as the path loss exponent is larger than 2. Predicting the effect of frequency change is more difficult.

To investigate this effect we let our original network converge (220 seconds in Fig. \ref{fig:random-inserted-nodes}) and then insert 10 APs in our 1 x 1 km area (230 seconds in Fig. \ref{fig:random-inserted-nodes}). We measure the change over time as the number of APs that change their frequency from one time interval to the next. From Fig. \ref{fig:random-inserted-nodes} we see that approximately one third of the APs change their frequency at the time the new APs are inserted into the network. After this point only 2-3 APs change their frequency for each 10 second interval until after 290 seconds where 13 APs change their frequency. After this, the change goes down to just above 0. An interesting observation is that when the 13 APs change their frequency at 290 seconds, the number of satisfied APs also increases. It seems that this change occurs from the discovery of new near-by neighbors both by new and old APs. When these APs adjust their frequency assignments the number of satisfied APs increases.

\begin{figure}[t!]
\centering
\includegraphics[width = 1.02\columnwidth]{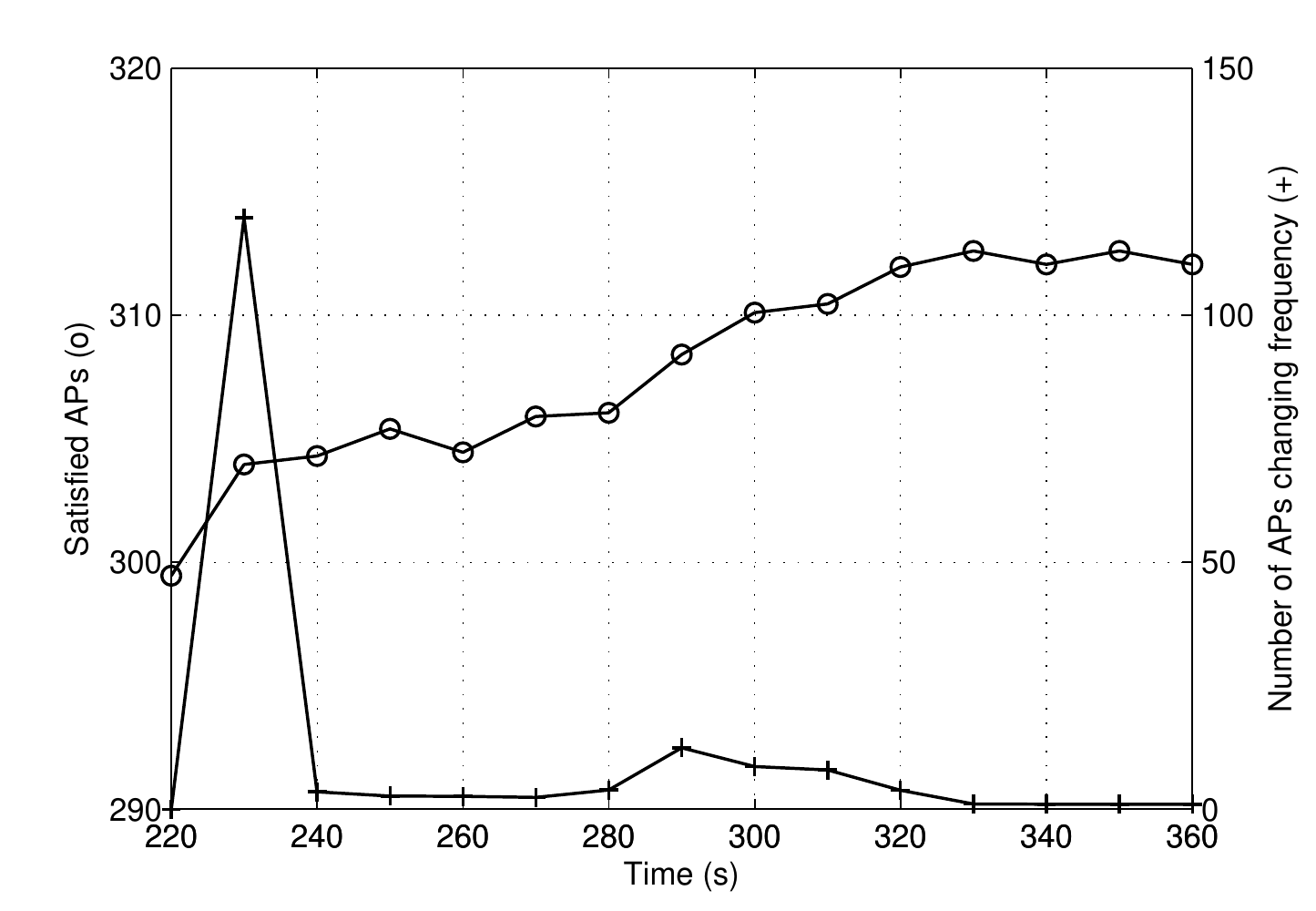}
\caption{Satisfied APs and number of APs changing frequency as a function of time. 10 additional APs are added to a converged network at 230 seconds.}
\label{fig:random-inserted-nodes}
\end{figure}

\section{Conclusion}
In this paper we investigated power and frequency allocation in a large-scale wireless network through potential games. As potential games have desirable properties such as pure strategy Nash equilibria and guaranteed convergence. For these reasons potential games have been used to analyze wireless networks in the past. Our main result is the characterization of the necessary amount of information each AP needs about its neighbors in order to conserve the potential game property of our proposed game. We have also shown that in certain cases the best response dynamic of each AP according to the proposed game is the optimal strategy of each AP, in the absence of global information at each AP. To obtain the necessary information we used a P2P discovery mechanism proposed in a previous paper. Through simulations we showed that the game performs better than current best-practice in such large scale networks with regard to performance, convergence and robustness, but this performance is dependent on neighbor information as suggested by our analytical analysis.

\appendices
\section{Proof of Proposition \ref{prop:resource-conv}}
\label{app:proof-resource-conv}
We prove convergence by showing that the game forms a generalized ordinal potential game. Convergence is then guaranteed as all finite generalized ordinal potential games have a pure strategy equilibrium \cite[Corollary 2.2]{Monderer1996} and all finite generalized ordinal potential games have a finite improvement path \cite[Lemma 2.3]{Monderer1996}.

A game with utility function $u_i$ constitutes an ordinal potential game if there exists a potential function such that $u_i(s_i,s_{-i})>u_i(s'_i,s_{-i}) \Rightarrow P(s_i,s_{-i})>P(s'_i,s_{-i})$.

From the definition of $\mathcal{N}_i$ we can rewrite the utility function as
\begin{align}
u_i(k) = &-\sum_{j\neq i,j\in\mathcal{N}_i}p_j(k)g_{ji}(k) - \sum_{j\neq i,j\in\mathcal{N}\backslash\mathcal{N}_i}p_j(k)g_{ji}(k) \nonumber \\ &- \sum_{j\neq i, j\in \mathcal{R}_i}p^{\text{nec}}_i(k)g_{ij}(k)f(p_j(k)).
\label{eq:new-utility}
\end{align}
where the first term is the observed interference at AP $i$ due to known neighboring APs of $i$, the second term is interference due to non-neighboring APs of $i$, and the third term is interference generated by $i$ to its known neighboring APs.

Since power control already satisfies the ordinal potential game condition, we only consider frequency selection. We also assume $|\mathcal{N}|>>|\mathcal{N}_i|$, so that a decrease in power at $i$ results in an increase in the sum of utility functions over all $j\in\mathcal{N}\backslash\mathcal{N}_i$.

Consider first a decrease in $u_i$ due to an increase in the first or last sum of (\ref{eq:new-utility}). Then we can set the second sum constant, and the game emits an exact potential function as in the case where all APs are known. Thus, the interesting case is when there is a decrease due to an increase in the second sum.

To show this part we investigate the utility functions of all $j\in\mathcal{N}_i$ as a result of a change in $u_i$. Let $l\in\mathcal{N}_i$ be the APs which transmit on the same channel as $i$ before $i$ changes channel. Then
\begin{equation}
u_{l\in\mathcal{N}_i}(s_i) = -p_ig_{il} - C_l-p_lg_{li}
\end{equation}
where $C_l$ is some constant which will not change due to $i$ changing its frequency. Similarly we denote $p\in\mathcal{N}_i$ as the APs which transmit on the same channel as $i$ after $i$ changes channel.
\begin{equation}
u_{p\in\mathcal{N}_i}(s_i) = C_p
\end{equation}
Finally we have that $u_i(s_i)$ is given as
\begin{equation}
u_i(s_i) = -\sum_{l\in\mathcal{N}_i}p_jg_{ji} - I_i - \sum_{l\in\mathcal{N}_i}p_ig_{ij}
\end{equation}
where $I_i$ is the term due to all $j\in\mathcal{N}\backslash\mathcal{N}_i$. Now assume that if $i$ changes its strategy from $s_i$ to $s'_i$ that the term $I_i$ decreases by $\epsilon$. I.e. $I'_i = I_i-\epsilon$. By such a decrease in interference at $i$, $i$ can decrease its power if it changes its strategy to $s'_i$. The fractional decrease is, neglecting thermal noise,
\begin{equation}
p'_i = \frac{\sum_{l\in\mathcal{N}_i}p_lg_{li}+I_i-\epsilon}{\sum{l\in\mathcal{N}_i}p_lg_{li}+I_i}p_i
\end{equation}
The new utilities then become
\begin{align}
u_{l\in\mathcal{N}_i}(s'_i) &= C_l \\
u_i(s'_i) &= \sum_{p\in\mathcal{N}_i}p_pg_{pi} - I_i + \epsilon \nonumber \\&- \sum_{p\in\mathcal{N}_i}\frac{\sum_{l\in\mathcal{N}_i}p_lg_{li}+I_i-\epsilon}{\sum_{l\in\mathcal{N}_i}p_lg_{li}+I_i}p_ig_{ip} \\
u_{p\in\mathcal{N}_i}(s'_i) &= -\frac{\sum_{l\in\mathcal{N}_i}p_lg_{li}+I_i-\epsilon}{\sum_{l\in\mathcal{N}_i}p_lg_{li}+I_i}p_ig_{ip} - C_p-p_pg_{pi}
\end{align}
By summation of all utilities before and after we get
\begin{equation}
P(s_i) = -2\sum_{l\in\mathcal{N}_i}p_ig_{ij} - 2\sum_{l\in\mathcal{N}_i}p_lg_{li}-\sum_{l\in\mathcal{N}_i}C_l-\sum_{p\in\mathcal{N}_i}C_p - I_i
\end{equation}
\begin{align}
P(s'_i) &= -2\sum_{p\in\mathcal{N}_i}p_pg_{pi} - 2\sum_{p\in\mathcal{N}_i}p_ig_{ip}\nonumber\\&+2\epsilon\sum_{p\in\mathcal{N}_i}\frac{p_ig_{ip}}{\sum_{l\in\mathcal{N}_i}p_jg_{ji}+I_i}
-\sum_{l\in\mathcal{N}_i}C_l\nonumber\\&-\sum_{p\in\mathcal{N}_i}C_p - I_i+\epsilon
\end{align}

To verify the ordinal potential function we must show that $u_i(s_i,s_{-i})>u_i(s'_i,s_{-i}) \Rightarrow P(s_i,s_{-i})>P(s'_i,s_{-i})$.
\begin{align}
-\sum_{l\in\mathcal{N}_i}p_jg_{ji}-&I_i-\sum_{l\in\mathcal{N}_i}p_ig_{il}>-\sum_{p\in\mathcal{N}_i}p_pg_{pi}-I_i+\epsilon-\nonumber\\&\sum_{p\in\mathcal{N}_i}p_ig_{ip}+\epsilon\sum_{p\in\mathcal{N}_i}\frac{p_ig_{ip}}{\sum_{l\in\mathcal{N}_i}p_jg_{ji}+I_i}
\end{align}
By multiplying each side by $2$ and we see that $P(s_i)>P(s'_i)$.

We see that for this analysis to hold, AP $i$ must have the knowledge of at least one neighboring AP that utilizes each channel.\qed
\section{Proof of Proposition \ref{prop:selfish}}
\label{app:proof-propselfish}
When all APs have the same radius the channel gain between them is also equal (i.e. $g_{ij} = g_{ji}$). By prove convergence by same argument as in the proof of Proposition \ref{prop:resource-conv}, by showing that there exists a utility function which emits a potential function such that the game forms a generalized potential game.

Without any neighbor information, the selfish utility function is given as
\begin{equation}
u_i(k) = \sum_{j\neq i}g_{ji}P_j(k)
\end{equation}
Clearly minimizing $U_i(k)$ is equivalent to minimizing the following, slightly altered utility function
\begin{equation}
u'_i(k) = \sum_{j\neq i}g_{ji}P_j(k)P_i^{\text{nec}}(k).
\end{equation}
Consider a potential function $P(\mathbb{A})$ defined for an allocation $\mathbb{A}$ as
\begin{equation}
P(\mathbf{s}) = \sum_{i=1}^N\sum_{k=1}^{K_i} u'_i(k)
\end{equation}

Now, AP $i$ will change its allocation from channel $k$ to $k'$ given that
\begin{equation}
u_i(k') = \sum_{j\neq i}g_{ji}P_j(k')<\sum_{j\neq i}g_{ji}P_j(k) = u_i(k)
\end{equation}
By multiplying each side by $P_i$ we have
\begin{equation*}
u'_i(k') =\sum_{j\neq i}g_{ji}P_j(k')P_i^{\text{nec}}(k')<\sum_{j\neq i}g_{ji}P_j(k)P_i^{\text{nec}}(k) = u'_{i}(k)
\end{equation*}
Assume before AP $i$ changed from $k$ to $k'$ that AP $l$ transmits on $k$ and AP $m$ transmits on $k'$. Since AP $i$ no longer transmits on $k$, clearly $u_l(k) \text{ (after)}<u_l(k) \text{ (before)}$. 
Define $\Delta u_i$, $\Delta u_l$ and $\Delta u_m$ as
\begin{align*}
\Delta u'_i &= u'_{i}(k) - u'_{i}(k') \\
&= \sum_{j\neq i}g_{ji}P_j(k)P_i^{\text{nec}}(k)-\sum_{j\neq i}g_{ji}P_j(k')P_i^{\text{nec}}(k')>0,
\end{align*}
\begin{align*}
\Delta u_l &= u_l(\text{before $i$ changed}) - u_l(\text{after $i$ changed}) \\
&= g_{il}P_l(k)P_i^{\text{nec}}(k)
\end{align*}
and
\begin{align}
\Delta u_m &= u_m(\text{before $i$ changed}) - u_m(\text{after $i$ changed}) \nonumber \\
&= -g_{im}P_m(k)P_i^{\text{nec}}(k) \nonumber
\end{align}
$\Delta u_l$ corresponds to the users that have gained by user $i$'s change from channel $k$ to $k'$ in terms of increased SINR. $\Delta u_m$ corresponds to the users that have lost by user $i$'s change from channel $k$ to $k'$ in terms of decreased SINR. 
The total change over all affected users is thus
\begin{align}
&\Delta P(\mathbf{s},\mathbf{s}') = \Delta u_i + \sum_{l\neq i} \Delta u_l + \sum_{m\neq i} \Delta u_m \nonumber \\
&=\sum_{j\neq i}g_{ji}P_j(k)P_i^{\text{nec}}(k)-\sum_{j\neq i}g_{ji}P_j(k')P_i^{\text{nec}}(k') \nonumber \\
&+ \sum_{j\neq i}g_{ij}P_j(k)P_i^{\text{nec}}(k)-\sum_{j\neq i}g_{ij}P_j(k')P_i^{\text{nec}}(k') \nonumber \\
&=^{(a)} 2\biggl(\sum_{j\neq i}a_{j}^{k}g_{ji}P_jP_i-\sum_{j\neq i}a_{j}^{k'}g_{ji}P_jP_i\biggr)>0
\end{align}
where the equality $(a)$ holds as long as $g_{ij} = g_{ji}$. Thus if $u_i(k)>u_i(k')$ $\Rightarrow$ $P(\mathbf{s})>P(\mathbf{s'})$ and this is thus a generalized ordinal potential game.\qed


\bibliographystyle{ieeetr}
\bibliography{../../../bib,../../../library}{}

\end{document}